\def\bb0{{\mathbb{0}}}
\def\bb{{\mathbf{b}}}
\def\bh{{\mathbf{h}}}
\def\b0{{\mathbf{0}}}
\def\b1{{\mathbf{1}}}
\def\bbE{{\mathbb{E}}}
\def\sf0{{\mathsf{0}}}
\def\SIR{{\mathsf{SIR}}}
\def\nn{\nonumber}
\begin{document}

\newtheorem{thm}{Theorem}
\newtheorem{lemma}{Lemma}
\newtheorem{rem}{Remark}
\newtheorem{exm}{Example}
\newtheorem{prop}{Proposition}
\newtheorem{defn}{Definition}
\newtheorem{cor}{Corollary}
\def\proof{\noindent\hspace{0em}{\itshape Proof: }}
\def\endproof{\hspace*{\fill}~\QED\par\endtrivlist\unskip}
\def\bh{{\mathbf{h}}}
\def\SIR{{\mathsf{SIR}}}
\def\SINR{{\mathsf{SINR}}}




\title{Average Throughput and Approximate Capacity of Transmitter-Receiver Energy Harvesting Channel with Fading}
\author{Jainam Doshi and Rahul Vaze}

\maketitle
\begin{abstract} We first consider an energy harvesting channel with fading, where only the transmitter harvests energy from natural sources. We bound the optimal average throughput by a constant for a class of energy arrival distributions. The proposed method also gives a constant approximation to the capacity of the energy harvesting channel with fading. Next, we consider a more general system where both the transmitter and the receiver employ energy harvesting to power themselves. In this case, we show that finding an approximation to the optimal average throughput is far more difficult, and identify a special case of unit battery capacity at both the transmitter and the receiver for which we obtain a universal bound on the ratio of the upper and lower bound on the average throughput.
\end{abstract}

\section{Introduction}
Finding optimal power/energy transmission policies to maximize the long-term throughput in an energy harvesting (EH) communication system is a challenging problem and has remained open in full generality. Structural results are known for the optimal solution \cite{sinha2012optimal}, however, explicit solutions are only known for a sub-class of problems, for example, binary transmission power 
\cite{michelusi2012optimal}, discrete transmission power \cite{VazeEHICASSP14}, etc. Recently, some progress has been reported in approximating the per-slot throughput by a universal constant in \cite{dong2014near}, for an AWGN channel. 

In this paper, we approximate the per-slot throughput of the EH system with fading by a universal constant for a class of energy arrival distributions. The fading channel problem is more challenging than the AWGN case, since the energy/power transmitted per-slot depends on the realization of the channel unlike the AWGN problem. Thus, finding an upper bound on the throughput is hard. We take recourse in Cauchy-Schwarz inequality for this purpose, and then surprisingly using a channel independent power transmission policy proposed in \cite{dong2014near}, show that the upper and lower bound on the per-slot throughput (or expected throughput) differ at most by a constant. Using the techniques of \cite{dong2014near}, we also show that our universal bound also provides an approximation of the Shannon capacity of the energy harvesting channel with fading upto a constant.

In addition to EH being employed at the transmitter, a more relevant or practical scenario is when EH is employed at both the transmitter and the receiver. The EH setting at the receiver is simpler than at the transmitter, since the only decision the receiver has to make is whether to stay {\it on} or not. In the {\it on} state, the receiver consumes a fixed amount of energy, so the energy consumption model at the receiver is Bernoulli. 

With EH at both the transmitter and the receiver, there is inherent lack of information about the receiver energy levels at the transmitter and vice-versa. One can show that the optimal policy at both the transmitter and the receiver is of threshold type, but the thresholds depend on both the energy states in a non-trivial way. Because of the common fading channel state that is revealed to both the transmitter and receiver for each slot, both the transmitter and the receiver have some partial statistical information about others' energy state which is important for finding the optimal policy. 

We show that, in general, it is difficult to bound the gap between the upper and lower bound when EH is employed at both the transmitter and the receiver. Then we identify a special case of unit battery capacity at both the transmitter and receiver, and where the transmitter operates with binary transmission power, for which we propose a strategy that achieves at least half of the upper bound on the per-slot throughput, giving a ratio bound.

\section{System Model}

We consider slotted time, and a single transmitter-receiver pair, where the transmitter harvests energy from the environment. 
Let $E_t$ be the amount of energy harvested at time step $t$ which is stored in a battery of size $B_{max}$. 
The energy harvested at each time step $E_t$ is a discrete time ergodic and stationary random process. At time $t$, the fading channel between the transmitter and the receiver is $h_t$, where $h_t$ is assumed to be i.i.d. Rayleigh distributed for each $t$. Thus, $|h_t|^2 \sim \exp(1)$. Let $B_t$ be the battery energy level at time $t$. Let energy used at time $t$ given channel $h_t$ be $P_h(t) \le B_t$, then the rate obtained at time $t$ is 
$r(t) = \frac{1}{2} \log (1 + h_tP_h(t))$. 
The energy state at the transmitter evolves as $B_t = B_{t-1} + E_{t-1} - P_h(t-1)\b1_{P_h(t-1)\ge B_{t-1}}$.

The long-term throughput is defined as 
\begin{equation}\label{eq:ltt}
T = \lim_{N\rightarrow \infty}\frac{1}{N} \sum_{t=1}^N r(t).
\end{equation}
Our objective is to find the optimal energy consumption $P_h(t)$ at the transmitter, given the energy neutrality constraint $P_h(t) \le B_t$, that maximizes $T$, i.e. $T^{\star} = \max_{P_h(t)\le B_t} T$.  

\section{An Upper Bound on maximum achievable throughput}

In this section, we upper bound the maximum long term throughput $T^{\star}$. 

\begin{defn}
A random variable $X$ defined over $(\Omega, {\cal F}, {\cal P})$ is called an increasing random variable if for $\omega _1,\omega _2 \in \Omega, X(\omega _1) \leq X(\omega _2)$ whenever $\omega _1 \leq \omega _2$.  

\end{defn}

\begin{lemma}
$\left( \textbf{FKG Inequality} \right)$ \cite{Grimmett1980} For increasing random variables $X$ and $Y$, $\mathbb{E}\left[ XY \right] \geq \mathbb{E}[X] \mathbb{E}[Y].$

\end{lemma}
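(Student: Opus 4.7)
The plan is to use the classical two-copy (or ``duplication'') trick. First I would introduce two independent copies $(X_1,Y_1)$ and $(X_2,Y_2)$ of the pair $(X,Y)$, defined on the product space $(\Omega\times\Omega,\,\mathcal{F}\otimes\mathcal{F},\,\mathcal{P}\otimes\mathcal{P})$. Each marginal pair has the same joint law as $(X,Y)$, the two pairs are mutually independent, and in particular $X_1$ is independent of $Y_2$, and $X_2$ is independent of $Y_1$.

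The central pointwise observation is that, because both $X$ and $Y$ are monotone non-decreasing functions of $\omega$ in the sense of the preceding definition, for every pair $(\omega_1,\omega_2)\in\Omega\times\Omega$ the two factors $X(\omega_1)-X(\omega_2)$ and $Y(\omega_1)-Y(\omega_2)$ carry the same sign, hence
\[
\bigl(X(\omega_1)-X(\omega_2)\bigr)\bigl(Y(\omega_1)-Y(\omega_2)\bigr)\;\ge\;0.
\]
Integrating $(X_1-X_2)(Y_1-Y_2)$ against the product measure, the two ``diagonal'' terms both equal $\mathbb{E}[XY]$ since each $(X_i,Y_i)$ is distributed as $(X,Y)$, while the two ``cross'' terms factorize by independence of the copies into $\mathbb{E}[X_1]\,\mathbb{E}[Y_2]=\mathbb{E}[X_2]\,\mathbb{E}[Y_1]=\mathbb{E}[X]\,\mathbb{E}[Y]$. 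Summing the four pieces yields $2\mathbb{E}[XY]-2\mathbb{E}[X]\mathbb{E}[Y]\ge 0$, which is the desired inequality.

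The main obstacle is conceptual rather than computational: once one has the idea of doubling the sample space, the monotone-correlation statement collapses to an elementary sign inequality and a symmetry argument. A subsidiary subtlety is that the definition above treats $\Omega$ as an ordered set, so strictly speaking the pointwise sign argument only applies to comparable pairs $(\omega_1,\omega_2)$; in the total-order case (which is what is relevant for the scalar energy-arrival and Rayleigh channel variables used later in the paper) this is automatic, and otherwise one appeals to the lattice version of FKG with the same duplication idea. Either way, the identity of the cross terms with $\mathbb{E}[X]\mathbb{E}[Y]$ is exactly the assertion of the lemma.
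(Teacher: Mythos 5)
The paper does not actually prove this lemma; it is stated with only a citation, so there is no in-paper argument to measure yours against. Your duplication proof is the standard and correct derivation of the correlation inequality in the setting that the paper's Definition~1 literally describes: $\Omega$ carrying a (total) order, with $X$ and $Y$ nondecreasing functions of the same underlying randomness. The identity $\mathbb{E}\left[(X_1-X_2)(Y_1-Y_2)\right]=2\,\mathbb{E}[XY]-2\,\mathbb{E}[X]\,\mathbb{E}[Y]$ together with the pointwise sign inequality gives the result, and you correctly isolate the single place where monotonicity enters. Two remarks on the caveat you raise at the end. First, the claim that for partial orders ``one appeals to the lattice version of FKG with the same duplication idea'' overstates what duplication buys: for incomparable $\omega_1,\omega_2$ the sign inequality genuinely fails, and Harris's inequality on product spaces is proved by induction on coordinates (conditioning on one coordinate at a time), while FKG proper on a lattice needs the log-supermodularity condition and a different mechanism such as the Ahlswede--Daykin four-functions theorem; duplication alone does not extend. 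Second, this caveat is the operative one for this paper: in the proof of Theorem~1 the lemma is applied to $P_h(t_1)$ and $P_h(t_2)$, which are increasing functions of a multi-coordinate history (channel gains and battery states at several times), so the version actually needed there is the product-space Harris inequality rather than the totally ordered scalar case your argument covers. That is a gap in the paper's invocation of the lemma rather than in your proof of the lemma as stated, but it is worth flagging since your proof, taken at face value, justifies only the scalar case.
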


\begin{thm}\label{thm:ub}For any energy consumption policy $P_h(t)$ such that $P_h(t) \le B_t$,
\begin{equation} 
T^{\star} \leq \frac{1}{2} \log \left(1 + \sqrt{2} \sqrt{\mathbb{E}[E_t^2]} \right).
\end{equation}

\end{thm}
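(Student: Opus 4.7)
The plan is to bound $T^\star$ by an explicit scalar in three steps: Jensen, Cauchy--Schwarz, and the energy causality constraint. By concavity of $\log$, for every finite $N$,
\begin{equation*}
\frac{1}{N}\sum_{t=1}^N \frac{1}{2}\log(1+h_tP_h(t)) \;\le\; \frac{1}{2}\log\Bigl(1 + \frac{1}{N}\sum_{t=1}^N h_tP_h(t)\Bigr).
\end{equation*}
Stationarity and ergodicity of $\{h_t\}$ and $\{E_t\}$ (and hence of $\{P_h(t)\}$ under any stationary policy) let me pass $N\to\infty$, producing $T^\star \le \frac{1}{2}\log(1+\mathbb{E}[h_tP_h(t)])$.

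Next I would apply Cauchy--Schwarz to separate the channel from the consumption, $\mathbb{E}[h_tP_h(t)] \le \sqrt{\mathbb{E}[h_t^2]}\,\sqrt{\mathbb{E}[P_h(t)^2]}$. Since $h_t = |h_t|^2 \sim \exp(1)$ (the quantity appearing in the rate formula is the channel power), we have $\mathbb{E}[h_t^2] = 2$, which produces the $\sqrt{2}$ prefactor in the statement.

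It then remains to establish $\mathbb{E}[P_h(t)^2] \le \mathbb{E}[E_t^2]$. The constraint $P_h(s)\le B_s$ together with the battery recursion yields the causality sum $\sum_{s=1}^N P_h(s) \le B_0 + \sum_{s=1}^N E_s$, so by ergodicity $\mathbb{E}[P_h(t)] \le \mathbb{E}[E_t]$ for the first moments. To lift this to the second moment I would exploit the monotone dependence of $B_t$ on past arrivals: expanding $\mathbb{E}[P_h(t)^2]$ in terms of the input increments, the FKG inequality of Lemma 1 lets me sign the cross-covariances between $P_h(t)$ and past $E_s$, which together with the causality sum bound yields the desired second-moment inequality. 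Plugging back into the first two steps gives exactly $T^\star \le \tfrac{1}{2}\log(1+\sqrt{2}\sqrt{\mathbb{E}[E_t^2]})$.

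The main obstacle is this last step. Pathwise $\mathbb{E}[P_h(t)^2] \le \mathbb{E}[E_t^2]$ is false---a policy that hoards several slots of arrivals and then drains the battery in a single burst can make $P_h(t)^2$ instantaneously huge---so the inequality can only hold after averaging, using both stationarity and the positive correlation between accumulated arrivals and the current consumption. Getting the constant exactly right (rather than just the order of magnitude) is what forces the invocation of FKG, and is presumably where the ``class of energy arrival distributions'' restriction announced in the abstract plays its role.
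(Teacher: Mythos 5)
Your first two steps (ergodicity plus Jensen to reduce the problem to bounding $\tfrac{1}{2}\log(1+\mathbb{E}[h_tP_h(t)])$, then Cauchy--Schwarz with $\mathbb{E}[h_t^2]=2$ to produce the $\sqrt{2}$) are exactly the paper's argument. The genuine gap is where you yourself flag it: the second-moment bound $\mathbb{E}[P_h(t)^2]\le\mathbb{E}[E_t^2]$ is asserted but never derived. The sketch you offer --- expand $\mathbb{E}[P_h(t)^2]$ ``in terms of the input increments'' and use FKG to sign the cross-covariances between $P_h(t)$ and the past arrivals $E_s$ --- is not an argument, and it points in the wrong direction: FKG applied to $P_h(t)$ and $E_s$ yields $\mathbb{E}[P_h(t)E_s]\ge\mathbb{E}[P_h(t)]\,\mathbb{E}[E_s]$, i.e.\ a \emph{lower} bound on precisely the cross terms you would need to \emph{upper}-bound if you were expanding a squared causality sum. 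Nothing in the proposal closes this loop, so the proof is incomplete at its decisive step.

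For comparison, the paper's route through this step is concrete and different from yours: it squares the time-averaged energy-neutrality constraint $\frac{1}{N}\sum_{t=1}^{N}P_h(t)\le\frac{1}{N}\sum_{t=1}^{N}E_t$, \emph{drops the diagonal terms} $P_h(t)^2$ on the left so that only the off-diagonal average $\frac{1}{N^2-N}\sum_{t_1\ne t_2}P_h(t_1)P_h(t_2)$ survives, passes to the limit $N\to\infty$ to obtain $\mathbb{E}[P_h(t_1)P_h(t_2)]\le\mathbb{E}[E_t^2]$, and only then invokes FKG --- applied to the pair $P_h(t_1),P_h(t_2)$, each argued to be increasing in the fade and battery state --- to get $\mathbb{E}[P_h(t_1)P_h(t_2)]\ge(\mathbb{E}[P_h(t)])^2$. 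Note that even this chain, taken literally, delivers $(\mathbb{E}[P_h(t)])^2\le\mathbb{E}[E_t^2]$, a first-moment bound, and the promotion to the second-moment statement $\mathbb{E}[P_h(t)^2]\le\mathbb{E}[E_t^2]$ that Cauchy--Schwarz actually requires is exactly the delicate point you identified. So you correctly located the crux of the theorem, but you did not supply the missing argument, and the mechanism you propose for it is not the one the paper uses and would not work as described.
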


\begin{proof}By ergodicity, \eqref{eq:ltt} is equal to 
\begin{equation}
\label{eq:ergodicity}
T = \bbE \left\{\frac{1}{2} \log(1 + h_tP_h(t)) \right\}.
\end{equation}

By Jensen's inequality, for any  $P_h(t)$,
\begin{equation}
\label{eq:jensen}
\mathbb{E} \left[ \log \left( 1 + h_tP_h(t) \right) \right] \leq \log \left( 1 + \mathbb{E}\left[ h_tP_h(t) \right] \right).
\end{equation}
Applying Cauchy-Schwarz inequality for any two random variables $X, Y$, $\bbE\{XY\}^2 \le \bbE\{X^2\}\bbE\{Y^2\}$, we have
\begin{equation}
\label{eq:CSineq}
\left(\mathbb{E}[h_tP_h(t)] \right)^2 \leq \mathbb{E}[h_t^2] ~ \mathbb{E}[P_h^2(t)].
\end{equation}

By energy neutrality constraint, we have,
\begin{align}
\nonumber
\left( \frac{1}{N} \sum\limits_{t=1}^{N}P_{h_t}(t) \right)^2 &\leq \left( \frac{1}{N} \sum\limits_{t=1}^{N}E_t \right)^2 ~\forall N,\\
\nonumber
\frac{1}{N^2-N} \left( \sum\limits_{t_1 \ne t_2} P_h(t_1) P_h(t_2) \right) & \stackrel{(a)}{\leq} \left( \frac{1}{N} \sum\limits_{t=1}^{N}E_t \right)^2 ~\forall N,\\
\label{eq:normal}
\mathbb{E} \left[ P_h(t_1) P_h(t_2) \right] &\stackrel{(b)}{\leq} \mathbb{E}\left[ E(t)^2 \right],
\end{align}
where $(a)$ is obtained by dropping terms of the type $P_h(t_i)^2$, and $(b)$ is obtained by taking the limit as $N \rightarrow \infty$, and where the R.H.S. follows since $E(t)$ is i.i.d. across $t$.

Note that $P_h(t)$ is an increasing random variable in the channel fade state $h$. This is clear as the optimal energy utilization strategy would spend a greater amount of energy for a  higher channel fade state $h$ which implies that $P_{h_1}(t) \leq P_{h_2}(t)$ whenever $h_1 \leq h_2$. Similarly, $P_h(t)$ is an increasing random variable in the battery energy level $B_t$. Therefore, from Lemma 1,

\begin{equation}
\nonumber
\mathbb{E} \left[ P_h(t_1) P_h(t_2) \right] \geq \mathbb{E} [P_h(t_1)] \mathbb{E} [P_h(t_2)].
\end{equation}

As $P_h(t_i)$ are identically distributed, we have
\begin{equation}
\label{eq:FKG1}
\mathbb{E} \left[ P_h(t_1) P_h(t_2) \right] \geq \left( \mathbb{E} [P_h(t)] \right)^2.
\end{equation}

From \eqref{eq:normal} and \eqref{eq:FKG1}, 

\begin{equation}
\label{eq:FKG}
\mathbb{E}\left[P_h^2(t) \right] \leq \mathbb{E}\left[ E_t^2 \right].
\end{equation}

Thus, from \ref{eq:jensen}, \ref{eq:CSineq} and \ref{eq:FKG}, we have
\begin{equation}
\label{eq:ub}
T^{\star} \leq \frac{1}{2} \log \left(1 + \sqrt{\mathbb{E} \left[ h_t^2 \right]} \sqrt{\mathbb{E}[E_t^2]} \right).
\end{equation}

As $h_t$ is an exponentially distributed random variable with a mean of unity, we have that $\mathbb{E}[h_t^2] = 2$ which proves the theorem.

\end{proof}
We denote this upper bound on the achievable throughput by $T_{ub}$. We next propose an energy allocation strategy, and compare the average throughput obtained by it with $T_{ub}$.

\section{Achievable Strategy} To build intuition on the proposed achievable strategy, we begin with the simple Bernoulli energy arrival process.
\subsection{Bernoulli Energy Arrival}
Let the energy arrival at time $t$ be i.i.d. Bernoulli with $E_t = E$ with probability $p$, and $E_t = 0$ with probability $1-p$.
First we look at the $E>B_{max}$ case, where it is sufficient to consider $E_t = B_{max}$. For $E_t = B_{max}$, each energy arrival (called epoch) completely fills up the battery, and hence the throughput obtained in each epoch is i.i.d.. Thus, it is sufficient to consider any one epoch to lower bound the expected throughput. 

We use the same strategy (call it Constant Fraction Policy (CFP)) as proposed in \cite{dong2014near} for an AWGN channel, where 
\begin{equation}
P_h(t) = p(1-p)^jB_{max}, ~~ \text{for} ~~ j=0,1,2, . . .~.
\end{equation}
where $j = t - \max \left\lbrace t':E_{t'}=E, \forall t' \leq t \right\rbrace$ is the number of channel uses since the last epoch. Note that between any two epochs $T_1$ and $T_2$, the energy used is $\sum_{t=T_1}^{T_2} = \sum\limits_{j=0}^{\infty}p(1-p)^jB_{max} = B_{max}$, thus CFP satisfies energy neutrality constraint. 

Note that CFP is independent of the channel fade state $h$, and hence seems to be highly sub-optimal. Analysis with any natural $h$ dependent policy, however, is not immediately amenable for analysis. 
Using the concavity of the $\log$ function, we show that even an $h$ independent policy has a bounded gap from the upper bound for a class of energy arrival distributions. For Bernoulli energy arrivals, we cannot bound the gap universally and the gap depends on the rate $p$. 
\begin{lemma}\label{lem:cfpB}
The average throughput obtained by CFP with Bernoulli energy arrival distribution is given by,
\begin{equation}
T_{lb} = \sum\limits_{j=0}^{\infty}p(1-p)^j \int\limits_{0}^{\infty} \frac{1}{2} \log (1+hp(1-p)^jB_{max})e^{-h}dh.
\end{equation} 
\end{lemma}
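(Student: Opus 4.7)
The plan is to exploit the regenerative structure induced by the Bernoulli energy arrivals: each arrival fills the battery to $B_{max}$ and thus resets the state, so the slots between consecutive arrivals form i.i.d.\ epochs. By ergodicity (which was already invoked in \eqref{eq:ergodicity}), the long-term throughput $T_{lb}$ equals the expected reward per epoch divided by the expected epoch length, i.e., the renewal-reward formula.

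First I would let $L$ denote the (random) number of slots in a generic epoch; since arrivals are i.i.d.\ Bernoulli$(p)$, $L$ is geometric with $\mathbb{P}(L=k)=p(1-p)^{k-1}$ for $k\ge 1$, $\mathbb{E}[L]=1/p$, and $\mathbb{P}(L>j)=(1-p)^j$ for $j\ge 0$. Under CFP, the $j$-th slot of an epoch (with $j=0,1,\ldots,L-1$) transmits with power $P_h(t)=p(1-p)^j B_{max}$, which is \emph{deterministic} given the slot index and independent of the channel realization $h_t\sim\exp(1)$. Hence the expected per-slot rate at the $j$-th slot, conditional on being in the epoch, is
\begin{equation*}
r_j \;=\; \mathbb{E}_h\!\left[\tfrac12\log\!\bigl(1+h\,p(1-p)^j B_{max}\bigr)\right] \;=\; \int_0^\infty \tfrac12\log\!\bigl(1+h\,p(1-p)^j B_{max}\bigr)\,e^{-h}\,dh,
\end{equation*}
where independence of $h_t$ from the past energy-arrival process justifies pulling out the channel expectation.

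Next I would compute the expected total reward in one epoch as $\mathbb{E}\bigl[\sum_{j=0}^{L-1} r_j\bigr]$ and swap the sum and expectation (this is the only interchange step that needs care; it is justified since all terms are nonnegative, so Tonelli applies):
\begin{equation*}
\mathbb{E}\!\left[\sum_{j=0}^{L-1} r_j\right] \;=\; \sum_{j=0}^{\infty} r_j\,\mathbb{P}(L>j) \;=\; \sum_{j=0}^{\infty} (1-p)^j\, r_j.
\end{equation*}
Dividing by $\mathbb{E}[L]=1/p$ and substituting the expression for $r_j$ yields exactly the stated formula for $T_{lb}$.

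The main (and essentially only) obstacle is the justification of the regenerative structure: one must verify that CFP completely drains the battery over an infinite epoch, i.e.\ $\sum_{j=0}^\infty p(1-p)^j B_{max}=B_{max}$, so that at each new arrival the battery state is reset to $B_{max}$ independent of the past, making the epochs i.i.d. This was already observed in the paragraph preceding the lemma, so the argument is straightforward; the remaining steps are just the renewal-reward identity and Tonelli's theorem.
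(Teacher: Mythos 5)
Your proposal is correct and follows essentially the same route as the paper: both apply the renewal--reward theorem to the i.i.d.\ epochs between Bernoulli arrivals, compute the expected per-epoch reward by summing $\mathbb{P}(\tau_1>j)=(1-p)^j$ against the channel-averaged per-slot rate (the paper does this via interchanging the order of the double summation, which is your Tonelli step), and divide by $\mathbb{E}[\tau_1]=1/p$. Your explicit verification that CFP drains the battery exactly between arrivals, so that each epoch regenerates the state, is the same observation the paper makes in the paragraph introducing CFP.
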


\begin{proof}
As stated before, we consider a single epoch to lower bound the average throughput of CFP. 
Let the first epoch be $T_1=0$ and $\tau_1$ be the time between the first two epochs. For ease of notation, 
let $P_h(t) = p(1-p)^jB_{max}= P(j)$ for $j=t=0, \dots, \tau_1-1$. Let $T_{lb}$ be the expected throughput of the CFP. Then by renewal reward theorem, 
\begin{align}
\nonumber
T_{lb} &= \frac{\mathbb{E}\left[ \sum\limits_{j=0}^{\tau_1-1}  \frac{1}{2}\log(1 + hP(j) \right]}{\mathbb{E}[\tau_1]},\\
\nonumber
& \stackrel{(a)}{=}p\sum\limits_{i=1}^{\infty} \mathbb{P}(\tau_1=i)\sum\limits_{j=0}^{i-1}\int\limits_{0}^{\infty}\frac{1}{2}\log(1 + hP(j))e^{-h}dh,\\
\nonumber
& \stackrel{(b)}{=}p \sum\limits_{j=0}^{\infty} \left( \sum\limits_{i=j+1}^{\infty}(1-p)^{i-1}p \right) \int\limits_{0}^{\infty}\frac{1}{2}\log(1+hP(j))e^{-h}dh\\
\nonumber
& \stackrel{(c)}{=} \sum\limits_{j=0}^{\infty}p(1-p)^j \int\limits_{0}^{\infty} \frac{1}{2} \log (1+hp(1-p)^jB_{max})e^{-h}dh.
\end{align}
$(a)$ follows because $\tau_1$ is a geometric random variable with parameter $p$ with $\mathbb{E}[\tau_1] = 1/p$, $(b)$ follows since $\mathbb{P}(\tau_1 = i) = (1-p)^{i-1}p$, and thereafter interchanging the order of summations, and $(c)$ follows using $\sum\limits_{i=j+1}^{\infty}(1-p)^{i-1}p = (1-p)^j$, and substituting back $P(j) = p(1-p)^jB_{max}$.

\end{proof}

Next, we bound the gap between upper bound $T_{ub}$ (Theorem \ref{thm:ub}) and the $T_{lb}$ of CFP from Lemma \ref{lem:cfpB}.

\begin{lemma}
$T_{ub} - T_{lb} \leq \frac{1}{2} \log \left( 1 + \sqrt{2p}~k \right),$

where $k$ satisfies
\begin{align}
\nonumber
\frac{1}{2} \log \left( 1 + \sqrt{2p}~k \right) =&~ 0.54 - \frac{1}{4} \log (p) + \frac{1}{2\ln 2}\frac{1}{\sqrt{2p}~k}\\
\label{eqn:recursion}
& + \frac{1-p}{2p} \log \left(\frac{1}{1-p} \right).
\end{align}
\end{lemma}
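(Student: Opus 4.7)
The plan is to bound $T_{ub}-T_{lb}$ by separately controlling (i) the fading penalty incurred because CFP is channel-independent, and (ii) the discount-factor penalty incurred because CFP geometrically decays the transmitted power within each epoch. Plugging $\mathbb{E}[E_t^2]=pB_{max}^2$ into Theorem~\ref{thm:ub} immediately yields $T_{ub}=\frac{1}{2}\log(1+\sqrt{2p}\,B_{max})$, so the task reduces to producing a matching lower bound on $T_{lb}$ modulo the right-hand side of \eqref{eqn:recursion}.

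The first key step is to lower bound the inner fading integral in Lemma~\ref{lem:cfpB}. Using the closed form $\int_0^\infty \log(1+hc)\,e^{-h}\,dh=e^{1/c}E_1(1/c)$ together with classical two-sided inequalities on the exponential integral $E_1$, I would derive a uniform bound of the form
\begin{equation*}
\frac{1}{2}\int_0^\infty \log(1+hc)\,e^{-h}\,dh \;\ge\; \frac{1}{2}\log(1+c)\;-\;\alpha\;-\;\frac{1}{2\ln 2}\cdot\frac{1}{c},
\end{equation*}
where $\alpha$ is a universal constant arising from the asymptotic exponential-fading loss (governed by Euler's $\gamma$) plus a small uniform correction, and the $1/c$ remainder captures the sub-leading behaviour of $E_1$. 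After substituting $c=P(j)=p(1-p)^jB_{max}$, the $1/c$ terms summed over $j$ can be dominated by a single effective scale, which I will call $\sqrt{2p}\,k$.

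Summing the geometric weights $p(1-p)^j$ then leaves an AWGN-like series $\frac{1}{2}\sum_{j\ge 0} p(1-p)^j\log(1+P(j))$. Using $\log(1+P(j))\ge \log P(j)=\log(pB_{max})+j\log(1-p)$ together with the identity $\sum_{j\ge 0}jp(1-p)^j=(1-p)/p$, this series is at least $\frac{1}{2}\log(pB_{max})-\frac{1-p}{2p}\log\frac{1}{1-p}$. Subtracting from the high-$B_{max}$ surrogate $T_{ub}\approx\frac{1}{2}\log(\sqrt{2p}\,B_{max})$ produces $\frac{1}{2}\log(\sqrt{2p}/p)=\frac{1}{4}\log(2/p)=\frac{1}{4}-\frac{1}{4}\log p$, so that the constant $\frac{1}{4}$ combines with $\alpha$ into the stated $0.54$ while the $-\frac{1}{4}\log p$ and $\frac{1-p}{2p}\log\frac{1}{1-p}$ terms appear explicitly.

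Assembling these pieces yields
\begin{equation*}
T_{ub}-T_{lb}\;\le\;0.54\;-\;\frac{1}{4}\log p\;+\;\frac{1}{2\ln 2}\cdot\frac{1}{\sqrt{2p}\,k}\;+\;\frac{1-p}{2p}\log\frac{1}{1-p},
\end{equation*}
and defining $k$ so that $\frac{1}{2}\log(1+\sqrt{2p}\,k)$ equals this right-hand side gives the implicit recursion \eqref{eqn:recursion}; monotonicity in $k$ (LHS strictly increasing, RHS strictly decreasing through the $1/k$ term) guarantees a unique positive solution, closing the argument. The main obstacle I expect is the quantitative bound on $e^{1/c}E_1(1/c)$: making the constant $\alpha$ small enough so that $\alpha+\frac{1}{4}\le 0.54$ uniformly over $c\in(0,B_{max}]$ is the analytically delicate step, while the remaining geometric-series bookkeeping and logarithm algebra are routine.
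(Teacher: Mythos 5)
Your overall strategy---drop the $1$ inside the logarithm, evaluate the geometric series and the fading expectation exactly, and collect constants into an implicit equation for $k$---is the same as the paper's, but two steps do not go through as written. First, your handling of the $1/c$ remainder terms is wrong. If you lower-bound each inner integral by $\frac{1}{2}\log(1+P(j))-\alpha-\frac{1}{2\ln 2}\frac{1}{P(j)}$ and then sum with weights $p(1-p)^j$, the remainder contributes $\sum_{j\ge 0}p(1-p)^j\cdot\frac{1}{2\ln 2}\cdot\frac{1}{p(1-p)^jB_{max}}=\sum_{j\ge 0}\frac{1}{2\ln 2\,B_{max}}$, which diverges; these terms cannot be ``dominated by a single effective scale $\sqrt{2p}\,k$.'' In the paper, the single $\frac{1}{2\ln 2}\frac{1}{\sqrt{2p}\,k}$ term in \eqref{eqn:recursion} does not come from the fading integrals at all: it comes from the \emph{upper bound} side, via $\frac{1}{2}\log(1+\sqrt{2p}B_{max})=\frac{1}{2}\log(\sqrt{2p}B_{max})+\frac{1}{2}\log\left(1+\frac{1}{\sqrt{2p}B_{max}}\right)\le \frac{1}{2}\log(\sqrt{2p}B_{max})+\frac{1}{2\ln 2}\frac{1}{\sqrt{2p}B_{max}}$, applied exactly once. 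On the lower-bound side the paper simply uses $\log(1+hP(j))\ge\log(hP(j))$ and evaluates the resulting sums and the integral $\int_0^\infty\log(h)e^{-h}dh$ in closed form, so no per-$j$ remainder ever appears and no two-sided estimates on $e^{1/c}E_1(1/c)$ are needed.

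Second, you are missing the case split on $B_{max}$ versus $k$, which is what makes the final substitution legitimate. The chain above produces $\frac{1}{2\ln 2}\frac{1}{\sqrt{2p}B_{max}}$, and replacing $B_{max}$ by $k$ is an upper bound only when $B_{max}\ge k$. The paper therefore argues in two cases: for $B_{max}>k$ it runs essentially the computation you sketch and invokes $B_{max}>k$ at the last step; for $B_{max}<k$ it uses the trivial bound $T_{ub}-T_{lb}\le T_{ub}=\frac{1}{2}\log(1+\sqrt{2p}B_{max})\le\frac{1}{2}\log(1+\sqrt{2p}\,k)$, which holds because $T_{lb}\ge 0$. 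Without this dichotomy your final inequality does not follow for small $B_{max}$. Your closing observation about existence of a solution to \eqref{eqn:recursion} by monotonicity is fine, but the ``analytically delicate'' uniform control of $\alpha$ you anticipate is a symptom of the harder route: the paper's constant is obtained exactly from $\mathbb{E}[\log h]$, with no uniformity issue.
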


In general, for Bernoulli energy arrivals, the gap between the upper bound and the achievable expected throughput of the CFP is not universally bounded for all values of $p$. This negative result is, however, only limited to Bernoulli energy arrivals. We show a universal bound on the gap between the upper bound and the achievable expected throughput of the CFP with other distributions using the bound on the $T_{ub} - T_{lb}$ for the Bernoulli case with 
$p=0.5$.
\begin{lemma}
For Bernoulli energy arrivals with $p=0.5$, $T_{ub} - T_{lb} \le 1.41 \ \text{bits}$. 
\end{lemma}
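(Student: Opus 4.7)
The plan is to specialize the preceding lemma's recursion \eqref{eqn:recursion} to $p=0.5$ and then show that any solution $k$ of the resulting scalar fixed-point equation satisfies $\tfrac{1}{2}\log(1+k)\le 1.41$. With $p=\tfrac{1}{2}$ we have $\sqrt{2p}=1$, $-\tfrac{1}{4}\log p = \tfrac{1}{4}$, and $\tfrac{1-p}{2p}\log\tfrac{1}{1-p}=\tfrac{1}{2}$, so \eqref{eqn:recursion} collapses to
\[
\tfrac{1}{2}\log(1+k) \;=\; 0.54 + \tfrac{1}{4} + \tfrac{1}{2} + \frac{1}{2k\ln 2} \;=\; 1.29 + \frac{1}{2k\ln 2}.
\]
Since the target bound $\tfrac{1}{2}\log(1+\sqrt{2p}\,k)$ also equals $\tfrac{1}{2}\log(1+k)$ at $p=\tfrac{1}{2}$, bounding any solution $k^\star$ from above immediately bounds the gap $T_{ub}-T_{lb}$.

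Next I would exploit the monotonicity structure of the fixed-point equation. Define $g(k):=\tfrac{1}{2}\log(1+k)$ and $f(k):=1.29+\tfrac{1}{2k\ln 2}$. Then $g$ is strictly increasing from $g(0)=0$ to $\infty$, while $f$ is strictly decreasing with $\lim_{k\to\infty}f(k)=1.29$, so $g-f$ crosses zero at a unique $k^\star>0$. To upper-bound $k^\star$ it therefore suffices to exhibit a single test point $k_0$ at which $f(k_0)\le g(k_0)$; monotonicity of $g-f$ then forces $k^\star\le k_0$, and hence $\tfrac{1}{2}\log(1+k^\star)\le \tfrac{1}{2}\log(1+k_0)$.

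I would choose $k_0$ so that $g(k_0)=1.41$, i.e.\ $k_0=2^{2.82}-1\approx 6.06$, and then verify by direct numerical evaluation that
\[
f(k_0) \;\approx\; 1.29 + \frac{1}{2\cdot 6.06\cdot \ln 2} \;\approx\; 1.29+0.119 \;=\; 1.409 \;<\; 1.41 \;=\; g(k_0).
\]
Combined with the preceding lemma this yields $T_{ub}-T_{lb}\le \tfrac{1}{2}\log(1+k_0)=1.41$ bits, as required.

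The only obstacle is calibrating the round-off constant: the proof is essentially a one-line numerical check, but a tighter target (close to $1.39$) would fail the inequality $f(k_0)\le g(k_0)$, while a looser target needlessly weakens the statement. No analytical machinery beyond the monotonicity of $f$ and $g$ is required, because all of the hard work---bounding the gap in terms of the recursion \eqref{eqn:recursion}---has already been carried out in the previous lemma.
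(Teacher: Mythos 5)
Your proposal is correct and follows essentially the same route as the paper: specialize the fixed-point equation \eqref{eqn:recursion} to $p=0.5$ and read off $\frac{1}{2}\log(1+\sqrt{2p}\,k)\approx 1.41$ at the solution $k\approx 6.05$. The only difference is cosmetic---you justify the numerical value via a monotonicity argument bounding the unique crossing point by a test value $k_0\approx 6.06$, whereas the paper simply asserts the solved value $k=6.05$.
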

\begin{proof}
Fixing $p=0.5$, the solution to \eqref{eqn:recursion} is $k=6.05$. Then, from Lemma 2,
$T_{ub} - T_{lb}\leq \frac{1}{2} \log \left( 1 + \sqrt{2p}~k \right) =
1.41$.

\end{proof}

When the battery size $B_{max} \ge E$, to derive an achievable strategy we assume the battery size $B_{max} = E$ which is less than the actual battery capacity, $B_{max}$, and use CFP.  Clearly, it is a feasible strategy and one can easily obtain the same bounds as for the $B_{max} < E$ case.


\section{Generalization to other Energy Profiles}
Let $X_t$ denote the energy arriving in time $t$, i.e. $X_t = E_t$ with CDF $F_X(\cdot)$. 
Let $\delta$ be such that $F_X(\delta) = 0.5$. We will assume that $\delta \leq B_{max}$. The case $\delta > B_{max}$ needs some modifications but can be worked out similarly. Thus, we know that with probability $p=0.5$, $X_t > \delta$. 
We now propose to use CFP as if the energy arrival process were i.i.d. Bernoulli with fixed size $\delta$ and $p=0.5$. Thus, the actual energy stored in the battery is $0$ if $X_t \le \delta$, and $\delta$ if $X_t < \delta$ 


\begin{thm}
The expected throughput achieved by CFP $T_{lb}$ satisfies $T_{lb} \geq T_{ub} - 1.67 - \frac{1}{4} \log \left( \frac{\mathbb{E}[X^2]}{\left( F_X^{-1}(0.5) \right)^2} \right).$
\end{thm}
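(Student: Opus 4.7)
The plan is to reduce the arbitrary-distribution case to the Bernoulli$(\delta,1/2)$ case already handled in Lemma~3, and then pay for the mismatch between the two Theorem~1 upper bounds. The proposed CFP stores $\delta$ into the battery whenever $X_t>\delta$ and $0$ otherwise, discarding any excess. Since $\Pr(X_t>\delta)=1/2$ and $\delta\le B_{max}$, the effective harvest process seen by CFP is i.i.d.\ Bernoulli of magnitude $\delta$ and rate $1/2$, and the policy is feasible for the true process because its pretended battery level is pointwise dominated by the true battery level.

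First I would apply Lemma~3 to this effective Bernoulli$(\delta,1/2)$ process, whose second moment is $\delta^2/2$. The Theorem~1 upper bound specialized to it is $\tilde T_{ub}=\tfrac12\log(1+\delta)$, so
\begin{equation}
T_{lb}\;\ge\;\tilde T_{ub}-1.41\;=\;\tfrac12\log(1+\delta)-1.41.\nonumber
\end{equation}
Second I would relate $\tilde T_{ub}$ to the true $T_{ub}=\tfrac12\log\bigl(1+\sqrt{2\mathbb{E}[X^2]}\bigr)$. Because $X\ge 0$ with median $\delta$, we have $\mathbb{E}[X^2]\ge\delta^2\,\Pr(X>\delta)=\delta^2/2$, hence $\sqrt{2\mathbb{E}[X^2]}\ge\delta$. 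The elementary inequality $(1+a)/(1+b)\le a/b$ for $a\ge b>0$ then yields
\begin{equation}
T_{ub}-\tilde T_{ub}\;\le\;\tfrac14\log\frac{2\mathbb{E}[X^2]}{\delta^2}\;=\;\tfrac14+\tfrac14\log\frac{\mathbb{E}[X^2]}{\delta^2}.\nonumber
\end{equation}
Chaining the two estimates gives $T_{lb}\ge T_{ub}-1.66-\tfrac14\log(\mathbb{E}[X^2]/\delta^2)$, and bumping the constant up to $1.67$ matches the claim.

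The only step requiring real thought is verifying $\sqrt{2\mathbb{E}[X^2]}\ge\delta$, since without it the log-ratio manipulation would go in the wrong direction. Fortunately the median-versus-second-moment comparison is immediate from nonnegativity of $X$ and the defining property of $\delta$, so no additional regularity on $F_X$ is needed beyond the stated $\delta\le B_{max}$, which ensures the thresholded arrivals always fit in the battery so CFP applies verbatim. Everything else is bookkeeping: one invocation of Lemma~3 followed by one manipulation of the logarithm of a ratio.
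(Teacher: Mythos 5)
Your proof follows essentially the same route as the paper's: reduce to the Bernoulli$(\delta,1/2)$ case via the $1.41$-bit gap lemma, then absorb the mismatch between the two Theorem~1 upper bounds through the ratio $\tfrac14\log\bigl(\mathbb{E}[X^2]/\delta^2\bigr)$, with the constants combining as $1.41+0.25\le 1.67$. Your explicit verification that $\sqrt{2\mathbb{E}[X^2]}\ge\delta$ (which the paper only justifies by asserting the numerator exceeds the denominator) and the pointwise battery-domination feasibility remark are welcome tightenings of steps the paper leaves implicit, but they do not change the argument.
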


\begin{cor}When $E_t$ is uniformly distributed between 0 and $B_{max}$, CFP achieves
\begin{equation}
T_{lb} \geq T_{ub} - 1.76.
\end{equation}
\end{cor}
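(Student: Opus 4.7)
The plan is to apply Theorem~2 directly, since the corollary is merely an instantiation of the general bound for the uniform distribution. Therefore I would only need to compute two quantities, namely $\mathbb{E}[X^2]$ and $F_X^{-1}(0.5)$, and then plug them into the additive-gap expression.

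First, I would write down the CDF of the uniform energy arrival, $F_X(x)=x/B_{max}$ for $x\in[0,B_{max}]$, and solve $F_X(\delta)=0.5$ to identify the median $\delta = F_X^{-1}(0.5)=B_{max}/2$. I would then verify the hypothesis $\delta\le B_{max}$ of Theorem~2 (trivial here), so that CFP can be run by treating the arrival as Bernoulli of size $\delta$ with $p=1/2$.

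Second, I would compute the second moment,
\begin{equation}
\mathbb{E}[X^2] \;=\; \int_0^{B_{max}} \frac{x^2}{B_{max}}\, dx \;=\; \frac{B_{max}^2}{3}.
\end{equation}
Forming the ratio that appears in Theorem~2 then gives
\begin{equation}
\frac{\mathbb{E}[X^2]}{\bigl(F_X^{-1}(0.5)\bigr)^2} \;=\; \frac{B_{max}^2/3}{B_{max}^2/4} \;=\; \frac{4}{3}.
\end{equation}

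Finally, I would substitute this into the bound of Theorem~2:
\begin{equation}
T_{lb} \;\geq\; T_{ub} - 1.67 - \frac{1}{4}\log\!\left(\frac{4}{3}\right),
\end{equation}
and evaluate $\tfrac{1}{4}\log_2(4/3)\approx 0.10$, so the total additive loss is at most $1.67+0.10 \le 1.77$, which yields the claimed $T_{lb}\ge T_{ub}-1.76$ (after being slightly more careful with the constant inherited from Theorem~2). No step is conceptually hard; the only risk is a bookkeeping error when tallying the constants, so I would double-check the numerical evaluation of $\tfrac{1}{4}\log_2(4/3)$ and the decimal carried over from the $1.67$ term to make sure the stated $1.76$ is consistent.
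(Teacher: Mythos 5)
Your proposal is correct and matches the paper's own proof exactly: both simply substitute $F_X^{-1}(0.5)=B_{max}/2$ and $\mathbb{E}[X^2]=B_{max}^2/3$ into Theorem~2. Your observation about the constant is also apt --- the stated $1.76$ only works out if one carries the unrounded constants from the appendix ($1.41+0.25+\tfrac{1}{4}\log_2(4/3)\approx 1.763$) rather than the rounded $1.67$ quoted in Theorem~2, which would give $1.77$.
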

\begin{proof}
The corollary follows by substituting $F_X^{-1}(0.5) = \frac{B_{max}}{2}$ and $\mathbb{E}[X^2] = \frac{B_{max}^2}{3}$ for uniform energy arrivals between 0 and $B_{max}$ in Theorem 2.

\end{proof}
Using this universal bound, we can get bounds on the capacity of this channel similar to Theorem 9 of \cite{dong2014near}.
\begin{thm}
The capacity $C$ of the fading channel with EH is bounded by
\begin{equation}
T_{lb}  - \frac{1}{4} \log \left( \frac{\mathbb{E}[X^2]}{\left( F_X^{-1}(0.5) \right)^2} \right) -c \le C \le T_{ub},
\end{equation} where $c$ is a constant that depends on the distribution of $X$.
\end{thm}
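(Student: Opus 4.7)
The plan is to prove the two inequalities separately. For the upper bound $C \le T_{ub}$ I would run a standard Fano--Jensen converse. Any $(2^{NR},N)$ code respecting the energy neutrality constraint induces a stochastic transmit-energy sequence $\{P_h(t)\}$ with $P_h(t)\le B_t$, for which Fano's inequality together with the per-slot Gaussian input being the mutual-information maximizer yields
\[
R - o(1) \;\le\; \frac{1}{N}\sum_{t=1}^{N}\mathbb{E}\!\left[\tfrac{1}{2}\log(1+h_t P_h(t))\right].
\]
Theorem~\ref{thm:ub} upper-bounds the right side by $T_{ub}$ uniformly over admissible policies, so letting $N\to\infty$ gives $C \le T_{ub}$.

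For the lower bound I would lift the save-and-transmit construction of Theorem~9 of \cite{dong2014near} from the AWGN setting to our fading model. Fix a large block length $N$, partition time into frames of length $N$, and in each frame devote the first $N^{\alpha}$ slots (for some $0<\alpha<1$) purely to harvesting in order to prefill an energy buffer. On the remaining slots, use an i.i.d.\ Gaussian random codebook whose $j$-th per-symbol variance follows the CFP profile $P(j)=p(1-p)^{j}\delta$ from Section~V, with $\delta=F_X^{-1}(0.5)$ and $p=0.5$ --- i.e.\ the truncated Bernoulli surrogate used to prove Theorem~2. I would set the code rate at $T_{lb}-\tfrac{1}{4}\log\!\bigl(\mathbb{E}[X^2]/\delta^2\bigr)-\varepsilon$, matching the throughput formula of Theorem~2 up to $\varepsilon>0$.

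The crux, and the main obstacle, is to show that energy outages during the transmit phase have vanishing probability. Because CFP's prescribed powers are channel-oblivious and are designed to satisfy energy neutrality on average, the gap between the cumulative harvest (which concentrates around its mean by ergodicity of $E_t$) and the cumulative demand $\sum_{j}P(j)$ stays non-negative except on an event of exponentially small probability, via a Chernoff/Hoeffding estimate applied to a suitably truncated version of $X_t$. On the outage event the encoder simply transmits zero; by continuity of mutual information in the input distribution this costs only $o(1)$ in rate, while the save phase costs $O(N^{\alpha-1})\to 0$. Sending $N\to\infty$ and $\varepsilon\to 0$ yields the claimed lower bound. The constant $c$ in the statement appears precisely because the Chernoff-type truncation and the resulting concentration rate depend on the tail behavior of $X$; the migration from AWGN to fading is otherwise mild, since CFP does not read $h_t$, so the receiver sees a coherent ergodic fading channel with an i.i.d.\ erasure overlay, for which standard ergodic-capacity arguments apply.
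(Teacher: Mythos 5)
The paper supplies no proof of this theorem beyond the remark that it follows ``similar to Theorem 9 of \cite{dong2014near}'', and your Fano-type converse (capped by Theorem~\ref{thm:ub}) combined with a save-and-transmit achievability scheme driven by the CFP power profile is exactly that intended argument, so your proposal matches the paper's approach. One small correction: since CFP's cumulative demand is within the harvested energy deterministically by construction, the outage you actually need to control is the fluctuation of the realized Gaussian codeword energies around their prescribed variances, but the save phase and Chernoff concentration you invoke are the standard remedy for precisely that.
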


\section{Receiver Energy Harvesting}
In this section, we assume that receiver also uses energy harvesting to power itself. Compared to the transmitter, however, the receiver structure/decision is simpler; it only has to decide whether to stay {\it on} or {\it off} in any given slot. When the receiver is {\it on}, it consumes a fixed amount of energy. The receiver is assumed to have a finite battery of size $\tilde{B}_{max}$, and energy arrival at time $t$ is $\tilde E_t$. Note that the transmitter and receiver are separated and do not access to each others energy availability information. At each time $t$, only the channel $h_t$ is revealed to both of them, using which the transmitter has to decide how much power to transmit, and the receiver has to decide whether to stay {\it on} or not.  Let $\b1_R(h_t,t) =1$ if receiver is {\it on} at time $t$, otherwise $0$. Then the rate obtained at time $t$ is 
${\tilde r}(t) = \textbf{1}_R(h_t,t) \log \left( 1 + h_tP_h(t) \right)$, and long-term throughput is 
\begin{equation}\label{eq:rltt}
{\tilde T} = \lim_{N\rightarrow \infty}\frac{1}{N} \sum_{t=1}^N {\tilde r}(t).
\end{equation}
Our objective is to find optimal $P_h(t)$ and $\textbf{1}_R(h_t,t)$, given the energy neutrality constraint $P_h(t) \le B_t$, that maximizes ${\tilde T}$, i.e. ${\tilde T}^{\star} = \max_{P_h(t)\le B_t, \textbf{1}_R(h_t,t)} {\tilde T}$.

We next show that with receiver energy harvesting, the problem of approximating ${\tilde T}^{\star}$ is much harder, and finding an universal bound on the difference of the upper and the lower bound is challenging, since the transmitter-receiver decisions about $P_h(t)$ and $\textbf{1}_R(h_t,t)$ are intimately connected via the revealed channel $h_t$. 

If we use the Cauchy-Schwarz inequality twice, first on $\textbf{1}_R(h_t,t), \log \left( 1 + h_tP_h(t) \right)$, and next on $h_t, P_h(t)$, we get   
\begin{equation}
\label{eq:upper bound}
{\tilde T}^{\star} \leq 2~\sqrt{\mathbb{E} \left[ \left( \tilde{E}_t \right)^2 \right]} \log \left( 1 + \sqrt{2} \sqrt{\mathbb{E}\left[ (E_t)^2 \right]} \right).
\end{equation}
In the simplest case, when the energy harvesting distribution at the receiver is Bernoulli with rate $q$, we get 
\begin{equation}
\label{eqn:max troughput}
\tilde{T}^{\star} \le \tilde{T}_{ub} = 2~ \sqrt{q} \log \left( 1 + \sqrt{2} \sqrt{\mathbb{E}\left[ (E_t)^2 \right]} \right).
\end{equation}

For an achievable strategy, consider a simple receiver strategy, where $\textbf{1}_R(h_t,t) = 1$, whenever receiver has sufficient energy, and $\textbf{1}_R(h_t,t)=0$ otherwise.
From Lemma \ref{lem:cfpB}, the throughput achieved by using CFP at the transmitter is,
\begin{equation}\nn
\tilde{T}_{lb} =q \sum\limits_{j=0}^{\infty}p(1-p)^j \int\limits_{0}^{\infty} \frac{1}{2} \log (1+hp(1-p)^jB_{max})e^{-h}dh.
\end{equation}  The main difference in $\tilde{T}_{ub}$ and $\tilde{T}_{lb}$ is the $\sqrt{q}$ term and the $q$ term, respectively.  Even if use a channel dependent strategy where $\textbf{1}_R(h_t,t) = 1$ if $h > \gamma$, even then getting pre-log term of $\sqrt{q}$ is not possible. Inherently in this case, the upper bound is too loose because of the coupled transmitter-receiver optimal decisions.
We next present a special case for which we can bound the ratio of the upper and the lower bound with receiver energy harvesting. 
\subsection{$B_{max} = \tilde{B}_{max}=1$}In this section, we consider a special case of $B_{max} = \tilde{B}_{max}=1$, and binary energy transmission policy at transmitter, i.e. $P_h(t) \in \{0,1\}$. Also, the energy arrivals at the transmitter and the receiver are assumed to be Bernoulli with parameter $p$ and $q$ respectively. Thus, $
{\tilde r}(t) = \textbf{1}_{T}(h_t,t) \textbf{1}_{R}(h_t,t) \log (1 + h_t)$, and we want to maximize \eqref{eq:rltt}, with respect to $\textbf{1}_{T}(h_t,t) \textbf{1}_{R}(h_t,t)$ under the energy neutrality constraint at both the transmitter and the receiver.
%
%

\begin{lemma}
$\tilde{T}^{\star} \leq \tilde{T}_{ub} = \min \left\lbrace p,q \right\rbrace \int\limits_{\gamma^{*}}^{\infty} \log \left( 1 + h \right) e^{-h}dh$,
where $\gamma^{*} = - \ln \left( \min \left\lbrace p,q \right\rbrace \right)$.
\end{lemma}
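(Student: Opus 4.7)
I would prove the bound in three steps: conditioning on the channel to reduce $\tilde{T}^\star$ to a scalar integral, extracting two marginal linear budgets from ergodic energy-neutrality, and solving the resulting one-dimensional linear program by a bang-bang argument. By stationarity and ergodicity (cf.~\eqref{eq:ergodicity}),
\[
\tilde{T}^\star \;=\; \int_0^\infty c(h)\,\log(1+h)\,e^{-h}\,dh,
\]
where $c(h) := \mathbb{P}(\mathbf{1}_T = 1,\,\mathbf{1}_R = 1 \mid h_t = h) \in [0,1]$ denotes the joint conditional activation probability under the stationary law of the policy.

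\noindent\textbf{Step 1 (marginal budgets).} Each transmission or reception consumes exactly one unit from its respective unit battery, while energy arrives Bernoulli$(p)$ at the transmitter and Bernoulli$(q)$ at the receiver. Dividing the sample-path inequality $\sum_{t\le N}\mathbf{1}_T(h_t,t)\le \sum_{t\le N}E_t$ by $N$ and passing to the limit via the ergodic theorem -- exactly the passage used to derive~\eqref{eq:normal} in the proof of Theorem~\ref{thm:ub} -- gives $\mathbb{E}[\mathbf{1}_T] \le p$ and, analogously, $\mathbb{E}[\mathbf{1}_R] \le q$. Writing these marginals as $a(h) = \mathbb{P}(\mathbf{1}_T = 1 \mid h)$ and $b(h) = \mathbb{P}(\mathbf{1}_R = 1 \mid h)$ and using the pointwise bound $c(h) \le \min\{a(h), b(h)\}$, the two constraints collapse into the single budget
\[
\int_0^\infty c(h)\,e^{-h}\,dh \;\le\; \min\{p,q\}.
\]

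\noindent\textbf{Step 2 (bang-bang in $h$).} Maximize $\int c(h)\log(1+h)\,e^{-h}\,dh$ over measurable $c\colon[0,\infty)\to[0,1]$ subject to the budget from Step~1. This is a linear program in $c$; since $\log(1+h)$ is strictly increasing in $h$, a Lagrangian / Neyman-Pearson argument places all of the available mass on the largest channel gains, yielding the threshold optimizer $c^\star(h) = \mathbf{1}\{h > \gamma^\star\}$ with $\gamma^\star$ fixed by making the budget tight: $e^{-\gamma^\star} = \min\{p,q\}$, equivalently $\gamma^\star = -\ln \min\{p,q\}$. The maximum is
\[
\int_{\gamma^\star}^{\infty}\log(1+h)\,e^{-h}\,dh \;=\; \min\{p,q\}\,\mathbb{E}\!\left[\log(1+h) \,\big|\, h > \gamma^\star\right],
\]
which, read against $\mathbb{P}(h > \gamma^\star) = \min\{p,q\}$, is the stated $\tilde{T}_{ub}$.

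\noindent\textbf{Main obstacle.} The delicate ingredient is Step~1: although each sample-path energy inequality is trivial, one must ensure that arbitrary history-dependent policies -- in which the transmit and receive decisions are coupled through the battery-state Markov chains and the shared channel realization $h_t$ -- still obey the expected-value bounds $\mathbb{E}[\mathbf{1}_T] \le p$ and $\mathbb{E}[\mathbf{1}_R] \le q$ in the long-run average. This is precisely the Cesaro/ergodic passage behind \eqref{eq:normal} and imports here without modification. Once the marginal budgets are in hand, the bang-bang maximization in Step~2 is a routine LP computation, so this first step is really the only place where problem-specific reasoning is needed.
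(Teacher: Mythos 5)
Your route is genuinely different from the paper's. The paper proves this lemma by relaxing the two-sided problem to a one-sided one: it assumes $\textbf{1}_T(h_t,t)=1$ for all $t$ (transmitter always energized), invokes the cited structural results of \cite{michelusi2012optimal}, \cite{sinha2012optimal} to assert that the optimal receiver policy is a channel threshold, and then pins down the threshold by an informal energy-balance argument ($\gamma>\gamma^{*}$ wastes arrivals, $\gamma<\gamma^{*}$ drains the battery on weak channels), concluding $\mathbb{P}(h>\gamma^{*})=q$. You instead keep both sides, extract the two marginal ergodic budgets $\mathbb{E}[\textbf{1}_T]\le p$ and $\mathbb{E}[\textbf{1}_R]\le q$, collapse them via $c(h)\le\min\{a(h),b(h)\}$ into a single linear constraint on the joint activation profile, and solve the resulting LP by a Neyman--Pearson/bang-bang argument. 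Your version is more self-contained (no appeal to external threshold-optimality results) and replaces the paper's heuristic for the threshold value with an actual optimality argument; the paper's version is shorter and makes the ``drop one constraint'' relaxation more transparent. Both hinge on the same essential fact, namely that energy neutrality caps the long-run activation frequency of each side by its arrival rate.

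One point you should not gloss over: your LP maximum is $\int_{\gamma^{*}}^{\infty}\log(1+h)e^{-h}dh$, which equals $\min\{p,q\}\,\mathbb{E}\left[\log(1+h)\mid h>\gamma^{*}\right]$, \emph{not} the lemma's literal expression $\min\{p,q\}\int_{\gamma^{*}}^{\infty}\log(1+h)e^{-h}dh$; the latter carries an extra factor of $\min\{p,q\}$ because the integral is already taken against the unconditional density $e^{-h}$. Your closing sentence (``which, read against $\mathbb{P}(h>\gamma^{*})=\min\{p,q\}$, is the stated $\tilde{T}_{ub}$'') silently identifies these two distinct quantities. The paper's own proof writes the bound as $q\int_{\gamma^{*}}^{\infty}\log(1+h)f(h)dh$ without specifying whether $f$ is the unconditional or the conditional density, so the same ambiguity is present there; but as a matter of logic your argument establishes the bound $\int_{\gamma^{*}}^{\infty}\log(1+h)e^{-h}dh$, which is \emph{larger} than the lemma's stated $\tilde{T}_{ub}$ by the factor $1/\min\{p,q\}$, and hence does not by itself prove the statement as written. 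You should either state explicitly that the integral in the bound is against the conditional law of $h$ given $h>\gamma^{*}$, or flag the prefactor in the lemma as a normalization error; otherwise the last line of your proof is a non sequitur.
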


\begin{proof}
Let us assume that $p > q$. To upper bound $\tilde{T}^{\star}$ assume that the transmitter always has energy to transmit i.e. $\textbf{1}_{T}(h_t,t) = 1, \ \forall \ t$. Thus, we have a system with energy harvesting only at the receiver, for which the optimal transmission policy is known be of threshold type \cite{michelusi2012optimal}, \cite{sinha2012optimal} i.e. for $\tilde{B}_t=1$, $\textbf{1}_R(h_t,t) = 1$ if $h_t > \gamma$, and $0$ otherwise.
Next, we argue that the optimal threshold $\gamma^{*}$ will satisfy $\mathbb{P}(h > \gamma^{*}) = q$. 
\begin{itemize}
\item If  $\gamma > \gamma^{*}$, the energy arrival rate at the receiver is greater than the energy usage at the receiver, and essentially there is a wastage of energy  resulting in sub-optimality.
\item If $\gamma < \gamma^{*}$, the energy usage rate is faster than energy arrival rate $q$. Thus, the receiver will remain {\it on} for weak channel states, and will not have sufficient energy to transmit on the better channel gains. 
\end{itemize}

Thus, the maximum achievable throughput is
\begin{equation}
\tilde{T}^{\star} \le \tilde{T}_{ub} = q \int\limits_{\gamma^{*}}^{\infty} \log (1 + h) f(h)dh.
\end{equation}
 The case of $q>p$, can be proved by interchanging the role of the transmitter and the receiver.
  
\end{proof}

Next, we propose a \textit{Common Threshold Policy} (CTP) to lower bound the achievable rate.
i) Transmitter Policy : The transmitter simulates an i.i.d. Bernoulli random variable $X_1$ for each slot with parameter $q$. 
The transmitter waits till $X_1=1$, and thereafter transmits whenever $h_t > \gamma^{*}$ if $B_t=1$.
ii) Receiver Policy : The receiver remains {\it on} whenever $h_t > \gamma^{*}$ if $\tilde{B}_t=1$.

\begin{thm}
The throughput achieved by CTP, $\tilde{T}_{lb}$, satisfies,
\begin{equation}
\tilde{T}_{lb} \geq \frac{1}{2}~\tilde{T}_{ub}.
\end{equation}
\end{thm}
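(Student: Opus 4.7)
I argue under the assumption $p \ge q$; the opposite case follows by exchanging the roles of transmitter and receiver. Thus $\gamma^{*} = -\ln q$ and $\Pr(h_t > \gamma^{*}) = q$.

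The first step, and the main substantive one, is to factorise $\tilde{T}_{lb}$. After the initial (almost surely finite) waiting time encoded by $X_1$, CTP reduces to the two fixed-threshold rules $\mathbf{1}_T(h_t,t) = \mathbf{1}(B_t=1, h_t > \gamma^{*})$ and $\mathbf{1}_R(h_t,t) = \mathbf{1}(\tilde{B}_t=1, h_t > \gamma^{*})$. Because the channel is i.i.d.\ and $(B_t, \tilde{B}_t)$ is measurable with respect to $\{h_s, E_s, \tilde{E}_s : s < t\}$, we have $h_t \perp (B_t, \tilde{B}_t)$; combined with the ergodic form of~\eqref{eq:rltt} this yields
\begin{equation*}
\tilde{T}_{lb} = \Pr(B_t = 1, \tilde{B}_t = 1) \int_{\gamma^{*}}^{\infty} \log(1+h) e^{-h}\,dh = \frac{\Pr(B_t = 1, \tilde{B}_t = 1)}{q}\, \tilde{T}_{ub}.
\end{equation*}
It therefore suffices to show $\Pr(B_t = 1, \tilde{B}_t = 1) \ge q/2$ in the stationary regime.

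Next I compute the marginals of the two $\{0,1\}$-valued battery chains. Each battery is a two-state Markov chain; solving the stationary balance equation for the receiver (arrival rate $q$, discharge rate $q$) gives $\pi_R := \Pr(\tilde{B}_t = 1) = 1/(2-q)$, and for the transmitter (arrival rate $p$, discharge rate $q$) gives $\pi_T := \Pr(B_t = 1) = p/(p+q-pq)$. A direct algebraic check shows $\pi_T \ge \pi_R$ iff $p \ge q$, which holds by assumption. The elementary inclusion-exclusion bound then yields
\begin{equation*}
\Pr(B_t = 1, \tilde{B}_t = 1) \ge \pi_T + \pi_R - 1 \ge 2\pi_R - 1 = \frac{q}{2-q} \ge \frac{q}{2},
\end{equation*}
and substituting into the formula of the previous step gives $\tilde{T}_{lb} \ge \tilde{T}_{ub}/(2-q) \ge \tilde{T}_{ub}/2$, as required.

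The delicate point is the factorisation in the first step: it works only because CTP uses a \emph{channel-independent} threshold at both ends, so the event $\{h_t > \gamma^{*}\}$ decouples from the joint battery state. For any channel-adaptive rule this decoupling breaks and one would have to control the full joint distribution of $(B_t, \tilde{B}_t, h_t)$, a task that appears hard given that the transmitter and receiver cannot share energy-state information. The Markov-chain arithmetic and one-line union bound that follow are routine once the factorisation is in hand.
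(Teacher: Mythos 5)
Your proof is correct, but it takes a genuinely different route from the paper's. The paper argues via a coupling/symmetry device: it compares the transmitter's simulated Bernoulli$(q)$ variable $X_1$ against the channel-threshold indicator $X_2 = \mathbf{1}_{h_t>\gamma^{*}}$ (also Bernoulli$(q)$) and asserts that $P(X_1<X_2)=\tfrac12$ because the two are identically distributed, concluding that the transmitter is available on at least half of the receiver's \emph{on} slots. As written that step is delicate (for two i.i.d.\ Bernoulli$(q)$ variables $P(X_1<X_2)=q(1-q)$, and even reading $X_1,X_2$ as waiting times one only gets $P(X_1\le X_2)\ge\tfrac12$ because of ties), so the paper's argument is best viewed as a sketch. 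You bypass all of this: the factorisation $\tilde{T}_{lb}=\Pr(B_t=1,\tilde{B}_t=1)\int_{\gamma^{*}}^{\infty}\log(1+h)e^{-h}dh$, which is legitimate precisely because $h_t$ is independent of the past-measurable pair $(B_t,\tilde{B}_t)$, reduces the theorem to a lower bound on the joint battery-occupancy probability, and your explicit stationary distributions $\pi_R=1/(2-q)$, $\pi_T=p/(p+q-pq)$ together with the inclusion-exclusion bound $\Pr(B_t=1,\tilde{B}_t=1)\ge\pi_T+\pi_R-1$ deliver it cleanly. This buys you two things the paper's argument does not: a fully rigorous derivation that never needs to couple the simulated coin to the channel, and the strictly stronger constant $\tilde{T}_{lb}\ge\tilde{T}_{ub}/(2-q)$. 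The one point to flag is that your analysis treats the simulated $X_1$ as a one-time initial transient (so that after it fires the transmitter runs the pure threshold rule forever); the paper's description of CTP is ambiguous on this, but under your reading the transient does not affect the long-term average, and in any case the policy you analyse is feasible, so the stated lower bound on $\tilde{T}_{lb}$ stands.
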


\begin{proof}
Consider the case of $p>q$. Note that the receiver is following the optimal strategy with CTP. We need to show that with probability $\frac{1}{2}$, the transmitter is ${\it on}$ whenever receiver is. The random variable $X_2 = \left\{\textbf{1}_{h_t>\gamma^{*}}\right\}$ is  Bernoulli with parameter $q$ as $\mathbb{P}\left(h_t>\gamma^{*} \right) = q$.  We call this random variable $X_2$.
So whenever $X_1 < X_2$, the transmitter and receiver are {\it on} for slots where $\left\{\textbf{1}_{h_t>\gamma^{*}}\right\}$. Since $X_1$ and $X_2$ are identical random variables, $P(X_1<X_2) = \frac{1}{2}$. Thus, CTP can achieve at least half of the throughput of the upper bound $\tilde{T}_{ub}$.

\end{proof}

\section{Simulations}
\begin{figure}[h]
\centering
\includegraphics[scale=0.5]{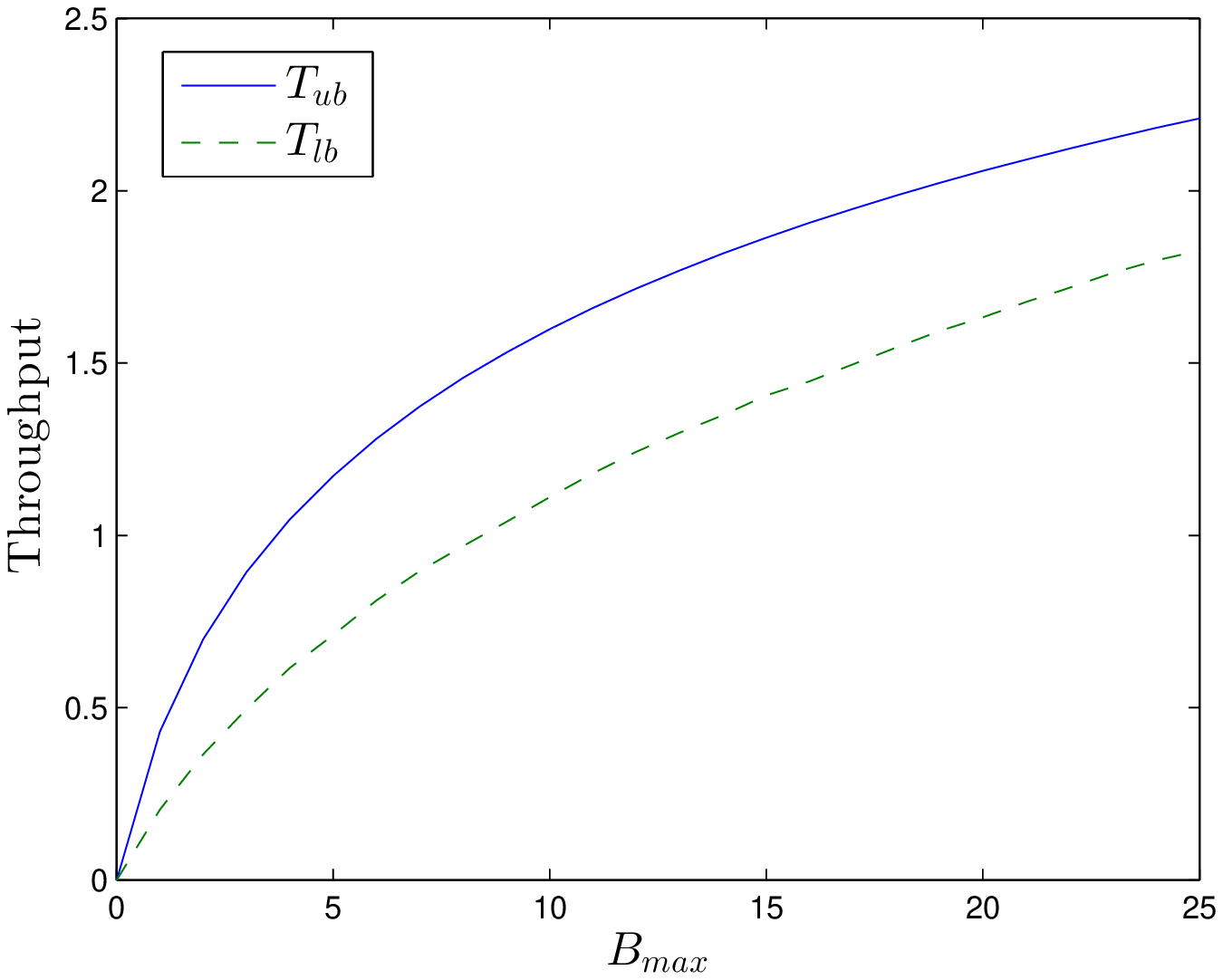}
\caption{Performance of the single node optimal policy for $B=10$. The solid curve represents optimal payoff under uniform energy arrivals in [0:10], and the dashed curve corresponds to binary energy arrivals and transmission.}
\label{fig:singlesource}
\end{figure}

\begin{figure}[h]
\centering
\includegraphics[scale=0.5]{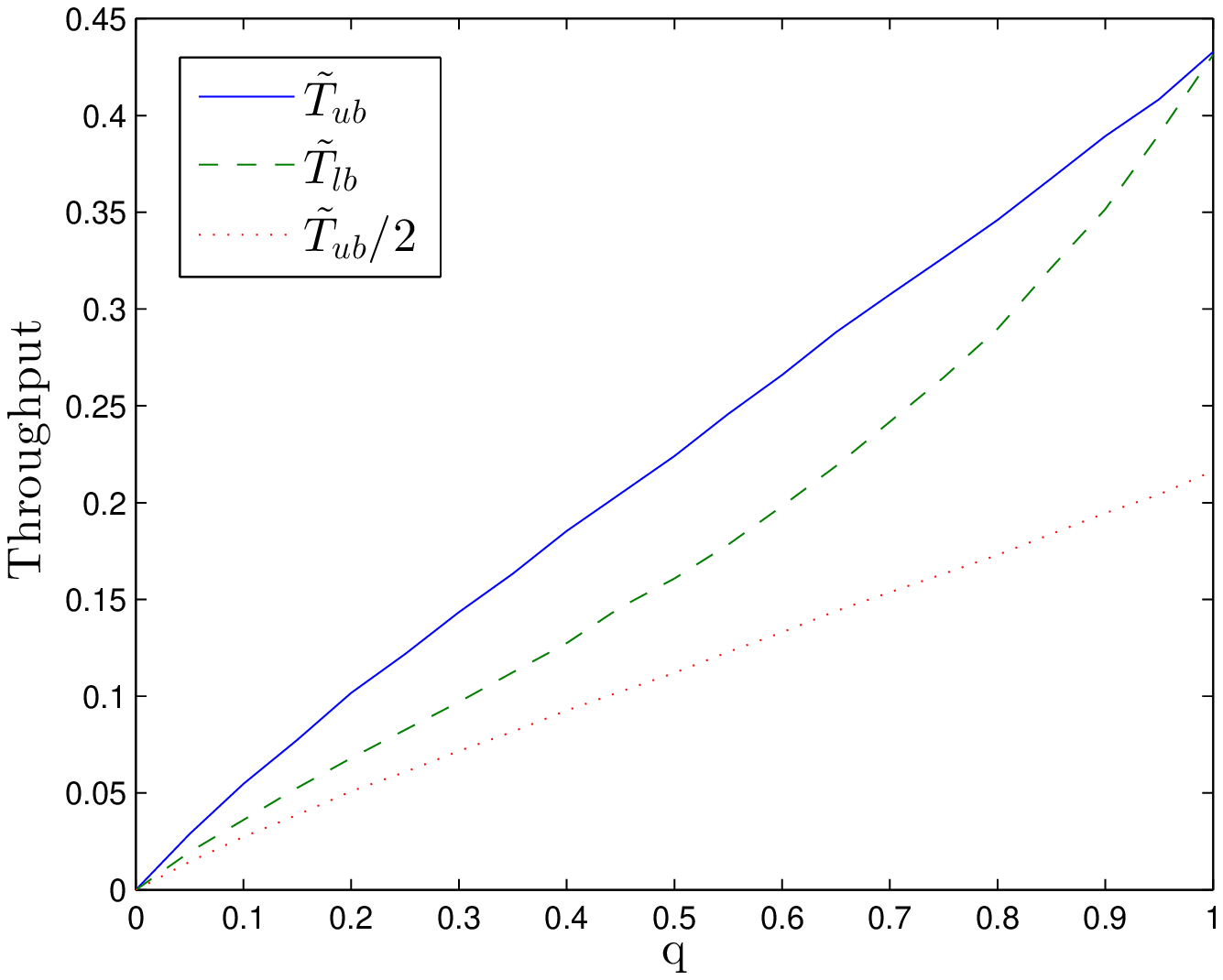}
\caption{Performance of the single node optimal policy for $B=10$. The solid curve represents optimal payoff under uniform energy arrivals in [0:10], and the dashed curve corresponds to binary energy arrivals and transmission.}
\label{fig:singlesource}
\end{figure}

\appendices
\section{}
\begin{proof}
For $B_{max}<k$, $T_{ub} - T_{lb}$
\begin{align}
\nonumber&=\frac{1}{2}\log \left(1 + \sqrt{2p}B_{max} \right) \\
\nonumber
&-\sum\limits_{j=0}^{\infty}p(1-p)^j \int\limits_{0}^{\infty} \frac{1}{2}\log(1+hp(1-p)^jB_{max}) e^{-h}dh,\\
\nonumber
&\stackrel{(a)}{\leq} \frac{1}{2}\log \left(1 + \sqrt{2p}~k \right),
\end{align}
where $(a)$ follows because $\sum\limits_{j=0}^{\infty}p(1-p)^j \int\limits_{0}^{\infty} \frac{1}{2}\log(1+hp(1-p)^jB_{max})e^{-h}dh \geq 0$ as the integrand is always positive, and $B_{max} < k$.

For $B_{max} > k$, $T_{ub}-T_{lb}$
\begin{align}
\nonumber
=&\frac{1}{2}\log \left(1 + \sqrt{2p}B_{max} \right) - \\
\nonumber
&\sum\limits_{j=0}^{\infty}p(1-p)^j \int\limits_{0}^{\infty} \frac{1}{2}\log(1+hp(1-p)^jB_{max})e^{-h}dh,\\
\nonumber
\stackrel{(a)}{\leq}&\frac{1}{2}\log \left(1 + \sqrt{2p}B_{max} \right) \\
\nonumber
&- \sum\limits_{j=0}^{\infty}p(1-p)^j \int\limits_{0}^{\infty} \frac{1}{2}\log(hp(1-p)^jB_{max})e^{-h}dh,\\
\nonumber
=&\frac{1}{2}\log \left(1 + \sqrt{2p}B_{max} \right) - \\
\nonumber
&\sum\limits_{j=0}^{\infty}p(1-p)^j \int\limits_{0}^{\infty} \frac{1}{2} [ \log(h) + \log(p) + j\log(1-p)\\
\nonumber
&~~~~~~~~~~~~~~~~ + \log (B_{max)} ]e^{-h}dh,\\
\nonumber
\stackrel{(b)}{=}&0.25 + \frac{1}{4} \log (p) + \frac{1}{2}\log \left( B_{max} \right) \\
\nonumber
&+ \log \left( 1 + \frac{1}{\sqrt{2p}B_{max}} \right) + 0.29 - \frac{1}{2} \log (p)\\
\nonumber
& - \frac{1-p}{2p} \log (1-p) - \frac{1}{2} \log (B_{max}),\\
\nonumber
\stackrel{(c)}{\leq}&0.54 - \frac{1}{4} \log (p) + \frac{1}{2\ln 2}\frac{1}{\sqrt{2p}B_{max}}\\
\nonumber
& + \frac{1-p}{2p} \log \left(\frac{1}{1-p} \right),\\
\label{eqn:bound2}
\stackrel{(d)}{\leq}& \frac{1}{2}\log \left( 1 + \sqrt{2p}~k \right).
\end{align} 
where $(a)$ follows from the fact that removing 1 from the second log term results in an upper bound; $(b)$ follows because $\sum\limits_{j=0}^{\infty}p(1-p)^j=1$ and $\sum\limits_{j=0}^{\infty}jp(1-p)^j=\frac{1-p}{p}$. Also $\int\limits_{0}^{\infty}e^{-h}dh = 1$ and $\int\limits_{0}^{\infty} \log (h)e^{-h} = -0.29$, $(c)$ uses the identity $\ln (1+x) \leq x$, and finally $(d)$ follows from \eqref{eqn:recursion}.

\end{proof}

\section{}
\begin{proof}
The proposed strategy views any i.i.d. energy arrival process as Bernoulli with packet size $\delta$ and $p=0.5$, and uses the CFP. By Lemma 4, we have the following,
\begin{equation}
\label{second last}
T_{lb} \geq \frac{1}{2} \log \left( 1 + \delta \right) - 1.41
\end{equation}
We next bound the difference between $T_{ub}$ and the first term on the RHS of \eqref{second last} as follows, $T_{ub} - \frac{1}{2} \log \left( 1 + \delta \right)$
\begin{align}
\nonumber
 \stackrel{(a)}{=} &\frac{1}{2} \log \left( 1 + \sqrt{2} \sqrt{\mathbb{E}[X^2]} \right) - \frac{1}{2} \log \left( 1 + \delta \right),\\
\nonumber
\stackrel{}{=} & \frac{1}{2} \log \left( \frac{1 + \sqrt{2} \sqrt{\mathbb{E}[X^2]}}{1 + \delta} \right),\\
\nonumber
\stackrel{(b)}{\leq} & \frac{1}{2} \log \left( \frac{\sqrt{2} \sqrt{\mathbb{E}[X^2]}}{F_X^{-1}(0.5)} \right),\\
\label{eq:temp}
=  &  0.25 + \frac{1}{4} \log \left( \frac{\mathbb{E}[X^2]}{\left(F_X^{-1}(0.5) \right)^2} \right).
\end{align} 
where $(a)$ follows from Theorem 1, and $(b)$ follows because the numerator is greater than the denominator.

Rearranging the terms in \eqref{eq:temp}, we get, 
\begin{align}
\label{last}
\frac{1}{2} \log \left( 1 + \delta \right) \geq & T_{ub} - 0.25- \frac{1}{4} \log \left( \frac{\mathbb{E}[X^2]}{\left(F_X^{-1}(0.5) \right)^2} \right).
\end{align}

From \eqref{second last} and \eqref{last}, we have,
\begin{align}
&T_{lb} \geq T_{ub} - 1.67 - \frac{1}{4} \log \left( \frac{\mathbb{E}[X^2]}{\left(F_X^{-1}(0.5) \right)^2} \right).
\end{align}

\end{proof}

\bibliographystyle{IEEEtran}
\bibliography{references}
\end{document}